\newcommand{\beq}{\begin{equation}} \newcommand{\eeq}{\end{equation}}
\newcommand{\bea}{\begin{eqnarray}} \newcommand{\eea}{\end{eqnarray}}
\newcommand{\bear}{\begin{eqnarray*}} \newcommand{\eear}{\end{eqnarray*}}
\newcommand{\lb}{\label} 
\newcommand{\rf}[1]{(\ref{#1})}
\begin{document}

\title {Noether theorem for action-dependent Lagrangian functions: conservation laws for non-conservative systems}

%
%

\author{M.~J.~Lazo \and J.~Paiva \and G.~S.~F.~Frederico
}


\institute{M.~J.~Lazo,  Corresponding author \at
              Instituto de Matem\'atica, Estat\'istica e F\'isica, Universidade Federal do Rio Grande, Rio Grande - RS, Brazil.\\
              \email{matheuslazo@furg.br}
              \and
              J.~Paiva \at
              Instituto de Matem\'atica, Estat\'istica e F\'isica, Universidade Federal do Rio Grande, Rio Grande - RS, Brazil.\\
              \email{juilsonpaiva@hotmail.com}
              \and
              G.~S.~F.~Frederico \at
              Universidade Federal do Cear\'a, Campus de Russas, Russas, Brazil. \\
              \email{gastao.frederico@ua.pt}             \\
}

\date{Received: date / Accepted: date}

\maketitle

\begin{abstract}

In the present work, we formulate a generalization of the Noether Theorem for action-dependent Lagrangian functions. The Noether's theorem is one of the most important theorems for physics. It is well known that all conservation laws, \textrm{e.g.}, conservation of energy and momentum, are directly related to the invariance of the action under a family of transformations. However, the classical Noether theorem cannot be applied to study non-conservative systems because it is not possible to formulate physically meaningful Lagrangian functions for this kind of systems in the classical calculus of variation. On the other hand, recently it was shown that an Action Principle with action-dependent Lagrangian functions provides physically meaningful Lagrangian functions for a huge variety of non-conservative systems (classical and quantum). Consequently, the generalized Noether Theorem we present enable us to investigate conservation laws of non-conservative systems. In order to illustrate the potential of application, we consider three examples of dissipative systems and we analyze the conservation laws related to spacetime transformations and internal symmetries.



\end{abstract}

\maketitle


\section{Introduction}

Since the introduction of the Action Principle in its mature formulation by Euler, Lagrange, and Hamilton, it has become one of the most fundamental principles of physics. It provides a solid and universal foundation for the whole dynamical structure in any classical or quantum conservative theory. Actually, it is from the Action Principle that the dynamical equations describing any conservative system, in any physical theory (classical or quantum), is obtained. However, despite its importance to study conservative systems, it is well known that the equations of motion for dissipative linear dynamical systems cannot be obtained from a physically meaningful Lagrangian in the classical Action Principle framework. A rigorous proof for the failure of the Action Principle in describing dissipative systems was given in 1931 by Bauer \cite{bauer}, who proved that it is impossible to obtain a dissipation term proportional to the first order time derivative in the equation of motion from the traditional Action Principle. 

Over the last century, several methods have been developed in order to overcome the failure of the Action Principle to describe non-conservative systems. Examples include time-dependent Lagrangians \cite{Ktevens,Havas,Negro,Negro2,Brinati,Tartaglia}, the Bateman approach by introducing auxiliary coordinates that describe the reverse-time system \cite{Bateman,MF,FT,CRTV,CRV,VujaJones} and Actions with fractional derivatives \cite{Riewe,LazoCesar}. Unfortunately, all these approaches either give us non-physically meaningful Lagrangian functions (in the sense that they provide non-physical relations for the momentum and Hamiltonian of the system) or make use of non-local differential operators with algebraic properties different from usual derivatives (see \cite{Riewe,LazoCesar} for a detailed discussion). In recent works \cite{MJJG,MJJG2}, in order to formulate an Action Principle for non-conservative systems, we take a different approach and we propose a physically meaningful Action Principle for dissipative systems by generalizing the variational problem \cite{Herg1,Herg2,GGB} for  Lagrangian density functions depending itself on an action-density field. In any physical theory, the Lagrangian function which defines the Action is constructed from the scalars of the theory, and from it, the corresponding dynamical equations can be obtained. However, the Action itself is a scalar and we might ask ourselves what would happen if the Lagrangian function itself were a function of the Action. For a one dimensional system, the answer to this question can be given by an almost forgotten variational problem proposed by Herglotz in 1930 \cite{Herg1,Herg2,GGB}. A reason for this problem to be almost unknown is that a covariant generalization for several independent variables is not direct. Only
recently the Herglotz variational problem gained more interest in the literature \cite{Herg2,GGB,GGB2,SMT,Zhang1,Zhang2,Zhang3,Zhang4,Zhang5} and, in particular,
 in a recent work \cite{MJJG} we formulated a covariant generalization for the Herglotz problem to construct a non-conservative gravitational theory from the Lagrangian formalism. Furthermore, by following the ideas we introduced in \cite{MJJG}, in \cite{MJJG2} we formulated a general Action Principle for non-conservative systems for Lagrangian density functions depending itself on an action-density field. We obtained a generalization of the Euler-Lagrange equation for this Action Principle and applied it in several classical and quantum systems. 

A physically consistent generalization of the Action Principle to non-conservative systems enable us to use all the mathematical machinery of the calculus of variation to study dissipative systems. Among these mathematical machineries, we can highlight the Noether's theorem that becomes one of the most important theorems for physics in the 20th century. Since the seminal work of Emmy Noether, it is well known that all conservation's laws in mechanics, as for example, conservation of energy and momentum, are directly related to the invariance of the Action under a family of transformations. Furthermore, conserved quantities in any dynamical system play a major role in the analysis of the system. They enable us to solve some problems without a more detailed knowledge of the dynamics, for example, as found in any undergraduate textbook in physics, we can solve easily several mechanical problems by making use of the energy and momentum conservations without the necessity of solving the dynamical equation given by Newton's second law of motion. In general, the continuous symmetries and its related conserved quantities give us first integrals for dynamical systems that can be used to simplify the problem. 
On the other hand, non-conservative forces remove energy from the systems and, as a consequence, the standard Noether constants of motion (as energy and momentum) are broken. In this context, the generalization of the Noether's theorem for the Action Principle defined in \cite{MJJG,MJJG2} is fundamental to investigate the action symmetries for nonconservative systems. In the present work, we generalize Noether's theorem for Lagrangian density functions depending itself on an action-density field. As examples of application to non-conservative systems, we study the problem of a vibrating string under a frictional force, and the problem of a complex scalar field.

The paper is organized in the following way. In Section \ref{sec:AP} we review the basic notions of Herglotz variational problem and our generalization for fields \cite{MJJG,MJJG2}. Furthermore, we discuss the gauge invariance in our Action Principle \cite{MJJG,MJJG2} and we introduce a Canonical Gauge. The Noether's theorem for Lagrangian density functions depending itself on an action-density field is obtained in Section \ref{sec:NT}. The examples of applications of the Noether's theorem are presented in Section \ref{sec:EX}. Finally, the conclusions are presented in Section \ref{sec:C}.

\section{Action Principle for Action dependent Lagrangians}\lb{sec:AP}

In this section, we first present the Action Principle introduced by us in \cite{MJJG,MJJG2}, and after we show that the Action is gauge invariant. This gauge invariance will play a fundamental role in the generalization of the Noether Theorem.

\subsection{The Herglotz variational problem}\lb{sec:APb}

In recent works \cite{MJJG,MJJG2}, we formulated a covariant Action Principle for Action dependent Lagrangian densities by generalizing the Herglotz variational problem. The original Herglotz problem, introduced in 1930 \cite{Herg1,Herg2}, consists in the problem of determining the function $x(t)$ that  extremizes (minimizes or maximizes) $S(b)$, where the Action $S(t)$ is a solution of
\beq
\lb{H}
\begin{split}
\dot{S}(t)&=L(t,x(t),\dot{x}(t),S(t)),\;\;\; t\in [a,b]\\
S(a)&=s_a, \;\;\; x(a)=x_a, \;\;\; x(b)=x_b, \;\;\; s_a,x_a,x_b \in \mathbb{R}.
\end{split}
\eeq
It is important to stress that $S(t)$ is a functional since, for each function $x(t)$, we have a different differential equation problem \eqref{H}. Therefore, $S(t)$ depends on $x(t)$. Furthermore, the Herglotz problem \eqref{H} reduces to the classical fundamental problem of the calculus of variations if the Lagrangian function $L$ does not depend on $S(t)$. In this particular case, by integrating \eqref{H}, we obtain the classical variational problem
\beq
\lb{H2}
S(b)=\int_a^b \tilde{L}(t,x(t),\dot{x}(t))\;dt\longrightarrow {\mbox{extremum}},
\eeq
where $x(a)=x_a$, $x(b)=x_b$, and
\beq
\lb{H3}
\tilde{L}(t,x(t),\dot{x}(t))=L(t,x(t),\dot{x}(t)) +\frac{s_a}{b-a}.
\eeq

Herglotz proved \cite{Herg1,Herg2} that a necessary condition for a function $x(t)$ to imply an extremum of the variational problem \eqref{H} is given by the generalized Euler-Lagrange equation:
\beq
\lb{HEL}
\frac{\partial L}{\partial x} -\frac{d}{dt}\frac{\partial L}{\partial \dot{x}}+\frac{\partial L}{\partial S}\frac{\partial L}{\partial \dot{x}}=0.
\eeq
It is easy to notice that in the case where $\frac{\partial L}{\partial S}=0$, as in the classical problem \eqref{H2}, the differential equation \eqref{HEL} reduces to the classical Euler-Lagrange equation. The potential application of Herglotz problem to non-conservative systems is evident even in the simplest case, where the dependence of the Lagrangian function on the Action is linear \cite{MJJG2}. For example, the Lagrangian function
\beq
\lb{H3b}
L=\frac{m\dot{x}^ 2}{2}-U(x)-\frac{\gamma}{m} S
\eeq
describes a dissipative system with a point particle of mass $m$ under a potential $U(x)$ and a viscous force with a resistance coefficient $\gamma$. From \eqref{HEL}, the resulting equation of motion,
\beq
\lb{H3c}
m\ddot{x}+\gamma \dot{x}=F,
\eeq
includes the well-known dissipative term proportional to the velocity $\dot{x}$, where $\ddot{x}$ is the particle acceleration and $F=-\frac{d U}{dx}$ is the external force. In this context, the linear term $\frac{\gamma}{m} S$ in the Lagrangian function \eqref{H3b} can be interpreted as a potential function for the non-conservative force \cite{MJJG2}. Furthermore, the Lagrangian given by \eqref{H3b} is physical in the sense it provides us with physically meaningful relations for the momentum and the Hamiltonian \cite{MJJG2,Riewe,LazoCesar}. 

\subsection{Generalization of the Herglotz problem for fields}

Although the Herglotz problem was introduced in 1930, a covariant generalization of \eqref{H} for several independent variables is not direct and was proposed only recently \cite{MJJG,MJJG2}. For a scalar field $\phi(x^{\mu})=\phi(x^1,x^2,\cdots,x^d)$ defined in a domain $\Omega \in \mathbb{R}^d$ ($d=1,2,3,\cdots$), the classical problem of variational calculus deals with the problem to find $\phi$ that extremizes the functional
\beq
\lb{H4}
S(\delta\Omega)=\int_{\delta\Omega}\mathcal{L}\left(x^\mu,\phi(x^\mu),\partial_\nu\phi(x^\mu)\right)d^dx,
\eeq
where $\delta\Omega$ is the boundary of $\Omega$, and $\phi$ satisfies the boundary condition $\phi(\delta\Omega)=\phi_{\delta\Omega}$ with $\phi_{\Omega}:\delta \Omega \longrightarrow \mathbb{R}$. The cornerstone of a generalization of the Herglotz problem for fields is to note that, for a given fixed $\phi$, the functional $S$ defined in \eqref{H4} reduces to a function of the boundary $\delta \Omega$. In this context, if there is a differentiable vector field $s^{\mu}$ such that
\beq
\lb{H5}
S(\delta\Omega)=\int_{\delta\Omega}s^\nu n_\nu\; d\sigma,
\eeq
where here and throughout the rest of the work we assume the summation convention on repeated indices, then, from the Divergence Theorem we obtain
\beq
\lb{H6}
S(\delta\Omega)=\int_{\delta\Omega}s^\nu n_\nu d\sigma=\int_{\Omega}\partial_\nu s^\nu d^dx=\int_{\Omega}\mathcal{L}\left(x^\mu,\phi(x^\mu),\partial_\nu\phi(x^\mu),s^\mu\right) d^dx,
\eeq
where we consider that $\delta\Omega$ is an orientable Jordan surface, $n_{\mu}$ is the surface normal vector field, and $d\sigma$ is the surface differential. Consequently, we can generalize the Herglotz variational principle as follows \cite{MJJG2}:

\begin{definition}[Fundamental Problem]
\lb{HF}
Let the action-density field $s^{\mu}$ be a differentiable vector field on $\Omega \in \mathbb{R}^d$. The fundamental problem of Herglotz variational principle for fields consists in determining the field $\phi$ that  extremizes (minimizes or maximizes) $S(\delta\Omega)$, where $S(\delta\Omega)$ is given by
\beq
\lb{H7}
\begin{split}
&\partial_\nu s^\nu=\mathcal{L}\left(x^\mu,\phi(x^\mu),\partial_\mu\phi(x^\mu),s^\mu\right),\quad x^\mu=(x^1, x^2,..., x^d)\in\Omega\\
&S(\delta\Omega)=\int_{\delta\Omega}s^\nu n_\nu d\sigma,\quad\phi(\delta\Omega)=\phi_{\delta\Omega},\quad \phi(\delta\Omega):\delta\Omega\longrightarrow \mathbb{R}.
\end{split}
\eeq
\end{definition}

Like in the original Herglotz problem, it is easy to notice that the Action functional defined by \eqref{H7} reduces to the usual Action \eqref{H4} when the Lagrangian function is independent of the action-density field $s^{\mu}$. Furthermore, we can prove the following condition for the extremum of \eqref{H7} (see \cite{MJJG2} for the proof):

\begin{theorem}[Generalized Euler-Lagrange equation for non-conservative fields]
\lb{HELF}
Let $\partial_{s^\nu}\mathcal{L}=\gamma_\nu$ be a gradient $\gamma_\nu=\partial_{\nu}f(x^\mu)=(\partial_{x_1}f,\cdots,\partial_{x_d}f)$ of a scalar field $f:\Omega \longrightarrow \mathbb{R}$, and let $\phi^*$ be the fields that extremize $S(\delta\Omega)$ defined in \eqref{H7}. Then, the field $\phi^*$ satisfies the generalized Euler-Lagrange equation
\beq
\label{GHEL}
\dfrac{\partial\mathcal{L}}{\partial \phi^* }-\frac{d}{dx^\nu}\left(\dfrac{\partial \mathcal{L} }{\partial\left(\partial_\nu\phi^*\right)}\right)+\gamma_\nu \dfrac{\partial \mathcal{L} }{\partial\left(\partial_\nu\phi^*\right)}=0.
\eeq
\end{theorem}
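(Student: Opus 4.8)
The plan is to mirror the classical derivation of the one–dimensional Herglotz equation \eqref{HEL}, replacing the single–variable integrating factor by an exponential weight built from the potential $f$, and then to reduce the problem to the fundamental lemma of the calculus of variations. First I would introduce a one–parameter family of admissible fields $\phi_\epsilon=\phi^*+\epsilon\eta$ with $\eta|_{\delta\Omega}=0$ (so that the boundary condition $\phi(\delta\Omega)=\phi_{\delta\Omega}$ is preserved), together with the induced family $s^\mu_\epsilon$ of action–density fields solving \eqref{H7}. Writing $\sigma^\mu=\partial_\epsilon s^\mu_\epsilon|_{\epsilon=0}$ and differentiating the defining relation $\partial_\nu s^\nu=\mathcal{L}(x^\mu,\phi,\partial_\mu\phi,s^\mu)$ with respect to $\epsilon$ at $\epsilon=0$ yields the linear first–order equation
\beq
\partial_\nu\sigma^\nu=\frac{\partial\mathcal{L}}{\partial\phi}\,\eta+\frac{\partial\mathcal{L}}{\partial(\partial_\mu\phi)}\,\partial_\mu\eta+\gamma_\mu\sigma^\mu,
\eeq
which is the field analogue of the linearized Herglotz equation for the variation $\theta(t)$ of the Action.

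The decisive step, and the place where the hypothesis $\gamma_\nu=\partial_\nu f$ is indispensable, is the elimination of the nonlocal term $\gamma_\mu\sigma^\mu$. I would multiply the previous identity by the integrating factor $e^{-f}$ and recognize the left–hand side as an exact divergence,
\beq
\partial_\nu\!\left(e^{-f}\sigma^\nu\right)=e^{-f}\left[\frac{\partial\mathcal{L}}{\partial\phi}\,\eta+\frac{\partial\mathcal{L}}{\partial(\partial_\mu\phi)}\,\partial_\mu\eta\right],
\eeq
which holds precisely because $\partial_\nu(e^{-f})=-e^{-f}\gamma_\nu$ when $\gamma_\nu$ is a gradient; for a non–gradient $\gamma_\nu$ no single–valued scalar weight would linearize the equation in this way, so this is exactly where the assumption enters. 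Integrating over $\Omega$, applying the Divergence Theorem to the left–hand side, and integrating the $\partial_\mu\eta$ term by parts (its boundary contribution vanishing because $\eta|_{\delta\Omega}=0$) would give
\beq
\int_{\delta\Omega}e^{-f}\sigma^\nu n_\nu\,d\sigma=\int_{\Omega}e^{-f}\left[\frac{\partial\mathcal{L}}{\partial\phi}-\frac{d}{dx^\mu}\!\left(\frac{\partial\mathcal{L}}{\partial(\partial_\mu\phi)}\right)+\gamma_\mu\frac{\partial\mathcal{L}}{\partial(\partial_\mu\phi)}\right]\eta\,d^dx,
\eeq
after expanding the weighted divergence and collecting the term proportional to $\gamma_\mu$.

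To conclude, I would invoke the extremum condition. Since $\phi^*$ extremizes $S(\delta\Omega)=\int_{\delta\Omega}s^\nu n_\nu\,d\sigma$, the first variation $\int_{\delta\Omega}\sigma^\nu n_\nu\,d\sigma$ vanishes at $\phi^*$, and together with the boundary data prescribing $s^\mu$ the weighted boundary integral on the left should vanish as well; this plays exactly the role of the endpoint conditions $\theta(a)=\theta(b)=0$ in the one–dimensional Herglotz problem. With a vanishing left–hand side and $e^{-f}>0$, the fundamental lemma of the calculus of variations applied to the arbitrary interior variation $\eta$ then forces the bracket to vanish pointwise, which is precisely \eqref{GHEL}.

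I expect the main obstacle to be the rigorous justification of this boundary step. In the one–dimensional case the scalar equation for $\theta$ together with $\theta(a)=0$ determines $\theta(b)$ uniquely, whereas here $\sigma^\mu$ is a vector field constrained only by a single scalar divergence equation and is therefore fixed only up to a divergence–type gauge freedom. Controlling this freedom — equivalently, showing that the weighted boundary term $\int_{\delta\Omega}e^{-f}\sigma^\nu n_\nu\,d\sigma$ is well defined and vanishes at the extremum — is the delicate point, and it is here that the gauge invariance of the Action and the gradient structure $\gamma_\nu=\partial_\nu f$ must be used together.
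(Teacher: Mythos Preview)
The paper does not actually prove this theorem in the text; immediately before the statement it writes ``(see \cite{MJJG2} for the proof)'', so there is no in-paper argument to compare against. Your outline is the standard route for Herglotz-type problems and is structurally sound: the linearized divergence equation for $\sigma^\mu$, the integrating factor $e^{-f}$ (which is precisely the device the paper reuses later in its Noether proof, where it writes $\zeta^\nu=A^\nu e^{f}$ in \eqref{N16}), the integration by parts, and the fundamental lemma are all correct.

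The gap you flag is real and is the genuinely delicate point of the field-theoretic generalization. The extremum condition gives you $\int_{\delta\Omega}\sigma^\nu n_\nu\,d\sigma=0$, but what you need is the \emph{weighted} version $\int_{\delta\Omega}e^{-f}\sigma^\nu n_\nu\,d\sigma=0$, and these do not coincide unless $f$ is constant on $\delta\Omega$ or $\sigma^\nu n_\nu$ vanishes pointwise there. The cleanest resolution is to observe that the Fundamental Problem \eqref{H7} leaves $s^\mu$ underdetermined (only $\partial_\nu s^\nu$ is prescribed), and to complete the problem by fixing $s^\nu n_\nu$ on $\delta\Omega$ as boundary data independent of $\phi$---the field analogue of $S(a)=s_a$ in \eqref{H}. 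Then $\sigma^\nu n_\nu|_{\delta\Omega}=0$ pointwise, both boundary integrals vanish, and your argument closes. The paper's own Noether argument (equations \eqref{N18}--\eqref{N19}) handles the analogous boundary term in the same spirit, by asserting $A^\nu n_\nu=0$ on $\delta D$ as a sufficient condition, which amounts to the same gauge/boundary choice. So your proposal is correct up to making that boundary specification explicit, and your diagnosis of where the subtlety lies is accurate.
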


It is easy to see that for Lagrangian functions independent on $s^{\mu}$, the generalized Euler-Lagrange equation \rf{GHEL} reduces to the usual one,
\beq
\lb{HEL2}
\dfrac{\partial\mathcal{L}}{\partial \phi^* }-\frac{d}{dx^\nu}\left(\dfrac{\partial \mathcal{L} }{\partial\left(\partial_\nu\phi^*\right)}\right)=0,
\eeq
since, in this case, $\gamma_\mu =0$. Furthermore, when the action-density field $s^{\mu}$ has only one non-null component and it is a function of only one variable, for example $s^1\neq 0$ and $x^1=t$, and $\Omega=[t_a,t_b]\otimes \mathbb{R}^{d-1}$, the fundamental problem in Definition \ref{HF} contains, as a particular case, the non-covariant problem introduced in \cite{GGB}. Moreover, in the latter situation, equation \eqref{H7} can be easily solved for Lagrangian functions linear on $s^1$, resulting in a $s^1$ expressed as a history-dependent function.

Finally, it is straightforward to generalize  the fundamental problem \eqref{H7} and the Euler-Lagrange equation \eqref{GHEL} to the case with several fields $\phi^i(x^{\mu})=\phi^i(x^1,x^2,\cdots,x^d)$ ($i=1,...,N$). In this case we have the Action $S(\delta\Omega)$ defined by \cite{MJJG2}
\beq
\lb{H7B}
\begin{split}
&\partial_\nu s^\nu=\mathcal{L}\left(x^\mu,\phi^i(x^\mu),\partial_\mu\phi^i(x^\mu),s^\mu\right),\quad x^\mu=(x^1, x^2,..., x^d)\in\Omega\\
&S(\delta\Omega)=\int_{\delta\Omega}s^\nu n_\nu d\sigma,\quad\phi^i(\delta\Omega)=\phi^i_{\delta\Omega},\quad \phi^i(\delta\Omega):\delta\Omega\longrightarrow \mathbb{R},
\end{split}
\eeq
and for a Lagrangian function for which $\partial_{s^\nu}\mathcal{L}=\gamma_\nu=\partial_{\nu}f(x^\mu)$, we obtain the following set of generalized Euler-Lagrange equations:
\beq
\label{GHEL2}
\dfrac{\partial\mathcal{L}}{\partial \phi^{i*} }-\frac{d}{dx^\nu}\left(\dfrac{\partial \mathcal{L} }{\partial\left(\partial_\nu\phi^{i*}\right)}\right)+\gamma_\nu \dfrac{\partial \mathcal{L} }{\partial\left(\partial_\nu\phi^{i*}\right)}=0,\quad i=1,...,N.
\eeq

We can now formulate an Action Principle suited to dissipative systems and free from difficulties found in previous approaches.
\begin{definition}[Generalized Action Principle \cite{MJJG2}]
\lb{AP}
The equation of motion for a physical field $\phi^i$ is the one for which the Action \eqref{H7B} is stationary.
\end{definition}

As a consequence of Definition \ref{AP}, the physical field should satisfy the generalized Euler-Lagrange equation \eqref{GHEL2}. Since for Lagrangian functions independent on the action-density the variational problem \eqref{H7B} reduces to the classical one, the generalized Action Principle is appropriate to describe both conservative and non-conservative systems \cite{MJJG2}.

\begin{remark} We can extend for the more general Action Principle in Definition \ref{AP} the physical interpretation we give to the Lagrangian \eqref{H3b} for a single particle under frictional forces. Since in Theorem \ref{HELF} we consider only the particular case when $\partial_{s^\nu}\mathcal{L}=\gamma_\nu$ with $\gamma_\nu=\partial_{\nu}f(x^\mu)=(\partial_{x_1}f,\cdots,\partial_{x_d}f)$, the Lagrangian functions $\mathcal{L}$ of a general physical system, in the sense of Definition \ref{AP}, can be written as $\mathcal{L}=\mathcal{L}_c+\gamma_\nu s^\nu$, where $\mathcal{L}_c$ is a standard Lagrangian function for the corresponding conservative system (kinetic energy minus the potential of conservative interactions) and $\gamma_\nu s^\nu$ can be interpreted as the potential energy of non-conservative interactions. In this context, the physical content of the dissipative interactions are contained in $\gamma_\nu$, that is, in the function $f(x^\mu)$. As we shall see in the examples displayed in Section \ref{sec:EX}, for a constant $\gamma_\nu$, and a Lagrangian quadratic in the first order time derivative, we will have a linear frictional force proportional to the first order time derivative,  as in the Lagrangian \eqref{H3b}. This results follows from the fact that the Euler-Lagrange equation give us a frictional force $\gamma_\nu \dfrac{\partial \mathcal{L}_c }{\partial\left(\partial_\nu\phi^{i*}\right)}$ in this case. More general nonlinear dissipative forces will be found when $\gamma_\nu$ is not constant and when the Lagrangian $\mathcal{L}_c$ is not quadratic in the first order time derivative.
\end{remark}

\subsection{Gauge invariance of the Action}

An important and interesting feature of the Action Principle in Definition \ref{AP} is that the Action $S(\delta \Omega)$ is gauge invariant. This follows directly from the fact that the Fundamental Problem \eqref{H7} do not completely fix the action-density field $s^\mu$. For example, in the three-dimensional case where $x^\mu=\vec{x}=(x^1,x^2,x^3)$ and $s^\mu=\vec{s}=(s^1,s^2,s^3)$, the action-density field $s^\mu$ is fixed unless the curl of any vector field $\vec{v}$, since for any $\vec{\overline{s}}=\vec{s}+\nabla\times \vec{v}$ we have
\beq
\lb{GI}
\overline{S}(\delta\Omega)=\int_{\delta\Omega}\vec{\overline{s}}\cdot \vec{n}\; d\sigma=\int_{\Omega}\nabla\cdot \vec{\overline{s}}\; d^3x=\int_{\Omega}\nabla\cdot \left( \vec{s}+\nabla\times \vec{v}\right) d^3x=\int_{\Omega}\nabla\cdot \vec{s}\; d^3x=S(\delta\Omega),
\eeq
where $\overline{S}(\delta\Omega)$ is defined by \eqref{H7} with $\vec{\overline{s}}$ instead of $\vec{s}$. Although the action-density field $s^\mu$ is not completely fixed by \eqref{H7}, the Fundamental Problem \eqref{H7} give us dynamic equations \eqref{GHEL} that completely fix the field $\phi$. We can make an analogy with the electromagnetic theory, where the Maxwell equations do not completely fix the four-vector potential, and say that the Action $S(\delta \Omega)$ is gauge invariant under gauge transformations of the action-density field $s^\mu$. Let us define the following Canonical Gauge:
\begin{definition}[Canonical Gauge]
Let $s^\mu$ be a differentiable vector field and $\gamma_\mu=\partial_{s^\mu}\mathcal{L}$. The Canonical Gauge for the Fundamental Problem \eqref{H7} is defined by the condition
\beq
\lb{CG}
\gamma_\nu\partial_\mu s^\mu-\gamma_\mu\partial_\nu s^\mu=0.
\eeq
\end{definition}
The reason why we choose the Canonical Gauge as in \eqref{CG} will be clear in the next section when we obtain the generalization of the Noether Theorem. Although the field $\phi$ is independent the gauge we choice, the gauge invariance will play a fundamental role in the generalization of the Noether Theorem.


\section{Generalized Noether Theorem}\lb{sec:NT}

Physical systems described by the Herglotz Euler-Lagrange equation are, in general, non-conservative in the classical sense (as an example, the total energy is non-conserved in systems under frictional forces). In this context, the generalization of Noether Theorem is fundamental to study conservative quantities in non-conservative systems described by Herglotz problems. In recent works, Noether's like theorems for several kinds of Herglotz variational problems are proposed \cite{GGB,GGB2,SMT,Zhang1,Zhang2,Zhang3,Zhang4,Zhang5}. In the present work, in order to generalize the Noether Theorem for our Action Principle given by the Fundamental Problem \ref{HF}, we consider invariance transformations in the $(x^ \mu,\phi)$-space, depending on a real parameter $\epsilon$. To be more precise, we consider the one-parameter group of invertible transformations
\beq
\label{N01}
\left\{
\begin{array}{l}
\tilde{x}^\mu=\varphi^\mu(x^\nu,\phi;\epsilon), \;\;\;\;\; \mu,\nu=1,...,d\\
\tilde{\phi}=\psi(x^\nu,\phi;\epsilon),
\end{array}
\right.
\eeq
where $\varphi^\mu(x^\nu,\phi;0)=x^\mu$ and $\psi(x^\nu,\phi;0)=\phi$.
We now define the transformed action-density $\tilde{s}^\mu$ of $s^\mu$, given by Definition \ref{HF}, as
\begin{definition}
The transformed action-density $\tilde{s}^\mu$ of $s^\mu$, given by Definition \ref{HF}, is a solution of the transformed differential equation
\beq
\lb{N02}
\tilde{\partial}_\nu \tilde{s}^\nu=\mathcal{L}\left(\tilde{x}^\mu,\tilde{\phi},\tilde{\partial}_\mu\tilde{\phi},\tilde{s}^\mu\right),\;\;\; (\tilde{x}^\mu\in \tilde{\Omega}),
\eeq
where $\tilde{\Omega}$ is the transformed domain of the domain $\Omega$, and $\tilde{\partial_\mu}=\frac{\partial}{\partial \tilde{x}^\mu}$.
\end{definition}

We can now define what means a functional $S(\delta D)$ be invariant under a one-parameter group of invertible transformation \rf{N01}.
\begin{definition}[Invariance] \lb{ID}
Let $D$ be a closed subdomain of $\Omega$ with boundary $\delta D\subset \Omega$. We say that the functional defined by \rf{H7} is invariant under the family of transformations \rf{N01} if the functional 
$\tilde{S}(\delta \tilde{D})=\int_{\delta \tilde{D}}\tilde{s}^\nu \tilde{n}_\nu\; d\tilde{\sigma}$
defined by the transformed equation \rf{N02}, where $\tilde{D}$ and $\delta \tilde{D}$ are the transformed $D$ and $\delta D$ under \rf{N01}, and the functional
$S(\delta D)$, defined by the non-transformed equation \rf{H7}, satisfy
\beq
\lb{N04}
\tilde{S}(\delta \tilde{D})=\int_{\delta \tilde{D}}\tilde{s}^\nu\tilde{n}_\nu\; d\tilde{\sigma}=\int_{\delta D}\tilde{s}^\nu n_\nu\; d\sigma=\int_{\delta D}s^\nu n_\nu\; d\sigma=S(\delta D).
\eeq
Furthermore, if the functional is invariant under a local one-parameter group $G$ of transformations \rf{N01}, we say that $G$ is a symmetry group.
\end{definition}

We can now obtain the following identity from the invariance Definition \ref{ID}:
\begin{theorem}\lb{TNHG} Let \rf{N01} be a symmetry group of the functional $S(\delta \Omega)$ defined in \rf{H7}. Then the following identity
\beq
\lb{GNT}
\frac{d}{dx^\nu} \left[\left(\frac{\partial \mathcal{L}}{\partial \left( \partial_\nu \phi\right)}\eta + \mathcal{L}\xi^\nu -\frac{\partial \mathcal{L}}{\partial \left( \partial_\nu \phi\right)}\partial_\mu \phi\; \xi^\mu \right)e^{-f(x^\alpha)}\right]+e^{-f(x^\alpha)}\left(\gamma_\nu\partial_\mu s^\mu-\gamma_\mu\partial_\nu s^\mu \right)\xi^\nu=0
\eeq
holds on solutions of the generalized Euler-Lagrange equation \rf{GHEL}, where $\xi^\mu=\frac{d\varphi^\mu}{d\epsilon}\vert_{\epsilon=0}$ and $\eta=\frac{d\psi}{d\epsilon}\vert_{\epsilon=0}$.
\end{theorem}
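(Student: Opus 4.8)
The plan is to turn the global invariance condition \rf{N04} into a single pointwise identity and then to recognise that identity as the weighted divergence appearing in \rf{GNT}. First I would apply the Divergence Theorem together with the two defining relations $\partial_\nu s^\nu=\mathcal{L}$ and $\tilde\partial_\nu\tilde s^\nu=\mathcal{L}(\tilde x,\dots,\tilde s)$ from \rf{H7} and \rf{N02} to rewrite \rf{N04} as the equality of bulk integrals $\int_{\tilde D}\mathcal{L}(\tilde x,\tilde\phi,\tilde\partial\tilde\phi,\tilde s)\,d^d\tilde x=\int_D\mathcal{L}(x,\phi,\partial\phi,s)\,d^dx$. Performing the change of variables $\tilde x=\varphi(x,\phi;\epsilon)$ in the first integral introduces the Jacobian $J=\det(\partial\tilde x^\mu/\partial x^\nu)=1+\epsilon\,\partial_\nu\xi^\nu+O(\epsilon^2)$, so that both sides run over the same arbitrary subdomain $D$. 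Differentiating at $\epsilon=0$ and using that $D$ is arbitrary yields the infinitesimal invariance in pointwise form, $\frac{d}{d\epsilon}\big|_0\mathcal{L}+\mathcal{L}\,\partial_\nu\xi^\nu=0$.

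Next I would expand $\frac{d}{d\epsilon}\big|_0\mathcal{L}$ by the chain rule, with generators $\xi^\mu=\frac{d\varphi^\mu}{d\epsilon}\big|_0$ and $\eta=\frac{d\psi}{d\epsilon}\big|_0$, the induced variation $\partial_\nu\eta-\partial_\mu\phi\,\partial_\nu\xi^\mu$ of $\partial_\nu\phi$, and the induced variation of the action-density $s^\mu$ entering through the slot $\partial_{s^\nu}\mathcal{L}=\gamma_\nu$. Splitting the explicit coordinate derivative of $\mathcal{L}$ off from its total derivative $\frac{d\mathcal{L}}{dx^\nu}$ lets me combine $\xi^\nu\frac{d\mathcal{L}}{dx^\nu}+\mathcal{L}\,\partial_\nu\xi^\nu$ into $\frac{d}{dx^\nu}(\mathcal{L}\xi^\nu)$, while the remaining $\phi$- and $\partial\phi$-terms regroup, through the form variation $\bar\eta=\eta-\xi^\mu\partial_\mu\phi$, into $\frac{\partial\mathcal{L}}{\partial\phi}\bar\eta+\frac{\partial\mathcal{L}}{\partial(\partial_\nu\phi)}\partial_\nu\bar\eta$. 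Here the generalized Euler--Lagrange equation \rf{GHEL} is decisive: written in the weighted form $\frac{\partial\mathcal{L}}{\partial\phi}=e^{f}\frac{d}{dx^\nu}\big(e^{-f}\frac{\partial\mathcal{L}}{\partial(\partial_\nu\phi)}\big)$, it converts those two terms into $\frac{d}{dx^\nu}\big(\frac{\partial\mathcal{L}}{\partial(\partial_\nu\phi)}\bar\eta\big)-\gamma_\nu\frac{\partial\mathcal{L}}{\partial(\partial_\nu\phi)}\bar\eta$, so that the canonical current $J^\nu=\frac{\partial\mathcal{L}}{\partial(\partial_\nu\phi)}\eta+\mathcal{L}\xi^\nu-\frac{\partial\mathcal{L}}{\partial(\partial_\nu\phi)}\partial_\mu\phi\,\xi^\mu$ of \rf{GNT} emerges inside a total divergence, accompanied by leftover $\gamma_\nu$-terms.

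The factor $e^{-f}$ is precisely the integrating factor that collects those leftovers. Since $\gamma_\nu=\partial_\nu f$, one has $\frac{dJ^\nu}{dx^\nu}-\gamma_\nu J^\nu=e^{f}\frac{d}{dx^\nu}(e^{-f}J^\nu)$, which reproduces the first bracket of \rf{GNT}. The residual pieces carrying the explicit $x$-dependence through $s^\mu$ are then rewritten with the defining relation $\partial_\mu s^\mu=\mathcal{L}$, so that $\gamma_\nu\mathcal{L}\,\xi^\nu=\gamma_\nu\partial_\mu s^\mu\,\xi^\nu$; combining this with the term $-\gamma_\mu\partial_\nu s^\mu\,\xi^\nu$ produced by the explicit coordinate derivative assembles exactly the antisymmetric expression $(\gamma_\nu\partial_\mu s^\mu-\gamma_\mu\partial_\nu s^\mu)\xi^\nu$ that multiplies $e^{-f}$ in \rf{GNT}.

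The step I expect to be the main obstacle is the correct bookkeeping of the variation of the action-density $s^\mu$. Because the Fundamental Problem \rf{H7} fixes $s^\mu$ only up to the gauge freedom discussed around \rf{GI}, the transformed field $\tilde s^\mu$ defined by \rf{N02}, and hence the variation that it feeds into the $\gamma_\nu$-slot of $\frac{d\mathcal{L}}{d\epsilon}$, is not an ordinary dynamical variation; it must be controlled through the transformed constraint \rf{N02} and the full chain of equalities in the invariance condition \rf{N04} rather than treated like the field $\phi$. It is this contribution, together with the $\partial_\mu s^\mu=\mathcal{L}$ substitution, that is responsible for the antisymmetric residual in \rf{GNT}; and since that residual is $(\gamma_\nu\partial_\mu s^\mu-\gamma_\mu\partial_\nu s^\mu)\xi^\nu$, it is annihilated by the Canonical Gauge \rf{CG}. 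This is why that gauge was singled out, and it is what ultimately upgrades the identity \rf{GNT} to a genuine conservation law $\frac{d}{dx^\nu}(e^{-f}J^\nu)=0$.
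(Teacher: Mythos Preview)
Your overall plan --- convert the invariance \rf{N04} into a pointwise infinitesimal identity, reorganize via the form variation $\bar\eta=\eta-\xi^\mu\partial_\mu\phi$, invoke \rf{GHEL} in integrating-factor form, and read off the weighted current $e^{-f}J^\nu$ plus a residual --- is the same strategy the paper follows, and your use of $\bar\eta$ and of $e^{f}\frac{d}{dx^\nu}(e^{-f}\,\cdot\,)$ is in fact a cleaner packaging of the same algebra. The difficulty you single out, however, is a real gap in the argument as written, not merely bookkeeping.

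Differentiating $\int_{\tilde D}\mathcal L(\tilde x,\tilde\phi,\tilde\partial\tilde\phi,\tilde s)\,d^d\tilde x=\int_D\mathcal L(x,\phi,\partial\phi,s)\,d^dx$ at $\epsilon=0$ gives, pointwise, $\frac{d\mathcal L}{d\epsilon}\big|_0+\mathcal L\,\partial_\nu\xi^\nu=0$, and the chain rule feeds in the term $\gamma_\nu\zeta^\nu$ with $\zeta^\nu=\frac{d\tilde s^\nu}{d\epsilon}\big|_0$. Carrying out exactly the manipulations you describe one lands on
\[
\frac{dJ^\nu}{dx^\nu}-\gamma_\nu\frac{\partial\mathcal L}{\partial(\partial_\nu\phi)}\,\bar\eta-\gamma_\mu\,\partial_\nu s^\mu\,\xi^\nu+\gamma_\nu\zeta^\nu=0,
\]
whereas \rf{GNT}, rewritten without the factor $e^{-f}$, is the same identity \emph{without} the $\gamma_\nu\zeta^\nu$ term. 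The composite equality $\tilde S(\delta\tilde D)=S(\delta D)$ alone does not force $\gamma_\nu\zeta^\nu=0$, and neither does the constraint \rf{N02} by itself, precisely because of the gauge freedom in $\tilde s^\mu$ that you mention.

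The paper closes this gap by using the \emph{two separate} equalities packed into Definition~\ref{ID}, not just their composite. From $\int_{\delta\tilde D}\tilde s^\nu\tilde n_\nu\,d\tilde\sigma=\int_{\delta D}\tilde s^\nu n_\nu\,d\sigma$ together with \rf{N02} and the divergence theorem one gets, after differentiating, $\partial_\nu\zeta^\nu-\gamma_\nu\zeta^\nu=\big[\text{your chain-rule terms, without }\zeta\big]$; writing $\zeta^\nu=A^\nu e^{f}$ turns this into $\partial_\nu A^\nu=\big[\ldots\big]e^{-f}$. The \emph{independent} equality $\int_{\delta D}\tilde s^\nu n_\nu\,d\sigma=\int_{\delta D}s^\nu n_\nu\,d\sigma$ then yields $\int_{\delta D}\zeta^\nu n_\nu\,d\sigma=0$ for every $D$, and the paper imposes the boundary condition $A^\nu n_\nu=0$ to conclude $\int_D\partial_\nu A^\nu\,d^dx=0$. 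That integral identity is now free of $\zeta^\nu$; it is this $\zeta$-free relation that is subsequently reorganized (via your same $\bar\eta$ and integrating-factor moves) into \rf{GNT}. Your outline needs an argument of this type --- or an explicit a priori gauge fixing of $\tilde s^\mu$ that kills $\gamma_\nu\zeta^\nu$ --- to pass from the pointwise invariance identity to \rf{GNT}.
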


\begin{proof}
In order to prove the generalized Noether’s theorem for the Fundamental Problem \rf{H7}, we recall from the Lie theory that near the identity transformation the action of the group \rf{N01} is the same as the action of the infinitesimal linear group
\beq
\lb{N05}
\left\{
\begin{array}{l}
\tilde{x}^\mu=x^\mu+\xi^\mu(x^\nu,\phi)\epsilon, \\
\tilde{\phi}=\phi+\eta(x^\nu,\phi)\epsilon,
\end{array}
\right.
\eeq
where $\xi^\mu=\frac{d\varphi^\mu}{d\epsilon}\vert_{\epsilon=0}$ and $\eta=\frac{d\psi}{d\epsilon}\vert_{\epsilon=0}$. From the condition \rf{N04} we have
\beq
\lb{N06}
\tilde{S}(\delta \tilde{D})=\int_{\delta \tilde{D}}\tilde{s}^\nu\tilde{n}_\nu\; d\tilde{\sigma}=\int_{\tilde{D}}\tilde{\partial}_\nu \tilde{s}^\nu\;d^d\tilde{x}=\int_{\tilde{D}}\mathcal{L}(\tilde{x}^\mu,\tilde{\phi},\tilde{\partial}_\mu\tilde{\phi},\tilde{s}^\mu)\;d^d\tilde{x}=\int_{\delta D}\tilde{s}^ \nu n_\nu\; d\sigma=\int_{D}\partial_\nu \tilde{s}^\nu\; d^dx.
\eeq
Now we perform a change of variables in \rf{N06} to go back to the original variables $x^\mu$ and $\phi$. We obtain
\beq
\lb{N07}
\int_{D}\partial_\nu \tilde{s}^\nu\; d^dx=\int_{D}\mathcal{L}(\tilde{x}^\mu,\tilde{\phi},\tilde{\partial}_\mu\tilde{\phi},\tilde{s}^\mu)\det\left(\frac{\partial \tilde{x}^\mu}{\partial x^\nu}\right)\;d^dx,
\eeq
where the determinant of the Jacobi matrix arises from the change of variables $x^\mu$. By taking a derivative of \rf{N07} with respect to $\epsilon$ we get
\beq
\lb{N08}
\left.\frac{d \tilde{S}(\delta \tilde{D})}{d\epsilon}\right|_{\epsilon=0}=\int_{D}\left.\partial_\nu  \frac{d\tilde{s}^\nu}{d\epsilon}\right|_{\epsilon=0} d^dx=\int_{D}\left.\left[\frac{d\mathcal{L}}{d \epsilon}\det\left(\frac{\partial \tilde{x}^\mu}{\partial x^\nu}\right)+\mathcal{L}\frac{d}{d \epsilon}\det\left(\frac{\partial \tilde{x}^\mu}{\partial x^\nu}\right)\right]\right|_{\epsilon=0}d^dx=0,
\eeq
since by hypothesis the one-parameter group of transformations \rf{N05} leaves the functional $S(\delta D)$ invariant, namely $\tilde{S}(\delta \tilde{D})=S(\delta D)$. Since from \rf{N05} we have $\left.\frac{\partial \tilde{x}^\mu}{\partial x^\nu}\right|_{\epsilon=0}=\delta_{ij}$, we get
\beq
\lb{N09}
\left.\det\left(\frac{\partial \tilde{x}^\mu}{\partial x^\nu}\right)\right|_{\epsilon=0}=1.
\eeq
After some calculations we also obtain
\beq
\lb{N10}
\frac{d}{d\epsilon}\left.\det\left(\frac{\partial \tilde{x}^\mu}{\partial x^\nu}\right)\right|_{\epsilon=0}=\frac{d \xi^\nu}{dx^\nu}.
\eeq
By inserting \rf{N09} and \rf{N10} into \rf{N08} we have
\beq
\lb{N11}
\int_{D}\left.\partial_\nu  \frac{d\tilde{s}^\nu}{d\epsilon}\right|_{\epsilon=0} d^dx=\int_{D}\left[\left.\frac{d\mathcal{L}}{d \epsilon}\right|_{\epsilon=0}+\mathcal{L}\frac{d\xi^\nu}{dx^\nu}\right]d^dx.
\eeq
Thus
\beq
\lb{N12}
\int_{D}\partial_\nu \zeta^\nu \; d^dx=\int_{D}\left[\frac{\partial \mathcal{L}}{\partial x^\nu}\xi^\nu+\frac{\partial \mathcal{L}}{\partial \phi}\eta+\frac{\partial \mathcal{L}}{\partial \left( \partial_\nu \phi\right)}\frac{d}{d\epsilon}\left.\left(\tilde{\partial}_\nu \tilde{\phi}\right)\right|_{\epsilon=0}+\gamma_\nu \zeta^\nu+\mathcal{L}\frac{d\xi^\nu}{dx^\nu}\right]d^dx,
\eeq
where now $\zeta^\mu=\left.\frac{d \tilde{s}^\mu}{d \epsilon}\right|_{\epsilon=0}$. After some calculations we get
\beq
\lb{N13}
\frac{d}{d\epsilon}\left.\left(\tilde{\partial}_\nu \tilde{\phi}\right)\right|_{\epsilon=0}=\frac{d\eta}{dx^\nu}-\partial_\mu \phi\frac{d\xi^\mu}{dx^\nu}.
\eeq
Finally, by inserting \rf{N13} into \rf{N12} results in
\beq
\lb{N14}
\int_{D}\left[\partial_\nu \zeta^\nu-\frac{\partial \mathcal{L}}{\partial x^\nu}\xi^\nu-\frac{\partial \mathcal{L}}{\partial \phi}\eta-\frac{\partial \mathcal{L}}{\partial \left( \partial_\nu \phi\right)}\left(\frac{d\eta}{dx^\nu}-\partial_\mu \phi\frac{d\xi^\mu}{dx^\nu}\right)-\mathcal{L}\frac{d\xi^\nu}{dx^\nu}-\gamma_\nu \zeta^\nu\right]d^dx=0.
\eeq
A sufficient condition to satisfy \rf{N14} for any subdomain $D\subset \Omega$ is that
\beq
\lb{N15}
\partial_\nu \zeta^\nu=\frac{\partial \mathcal{L}}{\partial x^\nu}\xi^\nu+\frac{\partial \mathcal{L}}{\partial \phi}\eta+\frac{\partial \mathcal{L}}{\partial \left( \partial_\nu \phi\right)}\left(\frac{d\eta}{dx^\nu}-\partial_\mu \phi\frac{d\xi^\mu}{dx^\nu}\right)+\mathcal{L}\frac{d\xi^\nu}{dx^\nu}+\gamma_\nu \zeta^\nu 
\eeq
Since $\gamma_\nu=\partial_\nu f(x^\mu)$ is a gradient vector on $D\subset \Omega$, \eqref{N15} implies that $\zeta^\nu$ can be written as
\beq
\lb{N16}
\zeta^\nu(x^\mu,\phi,\partial_\mu\phi,s^\mu)= A^\nu(x^\mu,\phi,\partial_\mu\phi,s^\mu)e^{f(x^\alpha)},
\eeq
where
\beq
\lb{N17}\partial_\nu A^\nu(x^\mu,\phi,\partial_\mu\phi,s^\mu)=\left[\frac{\partial \mathcal{L}}{\partial x^\nu}\xi^\nu+\frac{\partial \mathcal{L}}{\partial \phi}\eta+\frac{\partial \mathcal{L}}{\partial \left( \partial_\nu \phi\right)}\left(\frac{d\eta}{dx^\nu}-\partial_\mu \phi\frac{d\xi^\mu}{dx^\nu}\right)+\mathcal{L}\frac{d\xi^\nu}{dx^\nu}\right]e^{-f(x^\alpha)}.
\eeq
From \rf{N06} we have
\beq
\lb{N18}
\left.\frac{d\tilde{S}(\delta \tilde{D})}{d\epsilon}\right|_{\epsilon=0}=\int_{\delta D}\left.\frac{d\tilde{s}^\nu}{d\epsilon}\right|_{\epsilon=0} n_\nu\; d\sigma=
\int_{\delta D} \zeta^\nu n_\nu\; d\sigma=0.
\eeq
A sufficient condition to satisfy \rf{N18} for any subdomain $D\subset \Omega$ is that the function $A^\nu(x^\mu,\phi,\partial_\mu\phi,s^\mu)$ satisfy the boundary condition $A^\nu n_\nu=0$ for all $x^\mu \in \delta D$. Consequently,
\beq
\lb{N19}
\begin{split}
&\int_{\delta D}A^\nu(x^\mu,\phi,\partial_\mu\phi,s^\mu)n_\nu\; d\sigma =\int_{D}\partial_\nu A^\nu(x^\mu,\phi,\partial_\mu\phi,s^\mu)\; d^dx\\
&\;\;\;\;\;\;\;\;\;\;=\int_{D} \left[\frac{\partial \mathcal{L}}{\partial x^\nu}\xi^\nu+\frac{\partial \mathcal{L}}{\partial \phi}\eta+\frac{\partial \mathcal{L}}{\partial \left( \partial_\nu \phi\right)}\left(\frac{d\eta}{dx^\nu}-\partial_\mu \phi\frac{d\xi^\mu}{dx^\nu}\right)+\mathcal{L}\frac{d\xi^\nu}{dx^\nu}\right]e^{-f(x^\alpha)} \; d^dx =0.
\end{split}
\eeq
Let us now obtain the generalized Euler-Lagrange equation \eqref{GHEL} under the integral \eqref{N19}. By considering first the term involving $\frac{d\eta}{dx^\nu}$,  equation \eqref{N19} can be written as
\beq
\lb{N20}
\begin{split}
&\int_{\delta D}A^\nu(x^\mu,\phi,\partial_\mu\phi,s^\mu)n_\nu\; d\sigma =\\
&\;\;\;\;\;\;=\int_{D} \eta\left[\frac{\partial \mathcal{L}}{\partial \phi}-\frac{d}{dx^\nu}\frac{\partial \mathcal{L}}{\partial \left( \partial_\nu \phi\right)}+\gamma_\nu\frac{\partial \mathcal{L}}{\partial \left( \partial_\nu \phi\right)} \right]e^{-f(x^\alpha)} \; d^dx +\int_{D} \frac{d}{dx^\nu}\left[\frac{\partial \mathcal{L}}{\partial \left( \partial_\nu \phi\right)}\eta e^{-f(x^\alpha)}\right]\; d^dx+\\
&\;\;\;\;\;\;+\int_{D} \left[\frac{\partial \mathcal{L}}{\partial x^\nu}\xi^\nu-\frac{\partial \mathcal{L}}{\partial \left( \partial_\nu \phi\right)}\partial_\mu \phi\frac{d\xi^\mu}{dx^\nu}+\mathcal{L}\frac{d\xi^\nu}{dx^\nu}\right]e^{-f(x^\alpha)} \; d^dx =0,
\end{split}
\eeq
which on the solution of the generalized Euler-Lagrange equations \eqref{GHEL} becomes
\beq
\lb{N21}
\int_{D} \left[\frac{d}{dx^\nu}\left(\frac{\partial \mathcal{L}}{\partial \left( \partial_\nu \phi\right)}\eta e^{-f(x^ \alpha)}\right)+\left(\frac{\partial \mathcal{L}}{\partial x^\nu}\xi^\nu-\frac{\partial \mathcal{L}}{\partial \left( \partial_\nu \phi\right)}\partial_\mu \phi\frac{d\xi^\mu}{dx^\nu}+\mathcal{L}\frac{d\xi^\nu}{dx^\nu}\right)e^{-f(x^\alpha)}\right] \; d^dx =0.
\eeq
Finally, by considering the terms involving $\frac{d\xi^\mu}{dx^\nu}$ and $\frac{d\xi^\nu}{dx^\nu}$ in \eqref{N21} we get
\beq
\lb{N22}
\begin{split}
&\int_{D} \frac{d}{dx^\nu}\left[\left(\frac{\partial \mathcal{L}}{\partial \left( \partial_\nu \phi\right)}\eta + \mathcal{L}\xi^\nu -\frac{\partial \mathcal{L}}{\partial \left( \partial_\nu \phi\right)}\partial_\mu \phi \xi^\mu \right)e^{-f(x^\alpha)}\right] \; d^dx -\\
&\int_{D} \left[\xi^\nu\partial_\nu \phi\left( \frac{\partial \mathcal{L}}{\partial \phi}-\frac{d}{dx^\mu}\frac{\partial \mathcal{L}}{\partial \left( \partial_\mu \phi\right)}+\gamma_\mu \frac{\partial \mathcal{L}}{\partial \left( \partial_\mu \phi\right)}\right) + \gamma_\mu \partial_\nu s^\mu\xi^\nu - \mathcal{L} \gamma_\nu \xi^\nu \right]e^{-f(x^\alpha)} \; d^dx =0,
\end{split}
\eeq
which on the solution of the generalized Euler-Lagrange equations \eqref{GHEL}, and by using \eqref{H7}, reduces to
\beq
\lb{N23}
\begin{split}
&\int_{D}\bigg\{ \frac{d}{dx^\nu} \left[\left(\frac{\partial \mathcal{L}}{\partial \left( \partial_\nu \phi\right)}\eta + \mathcal{L}\xi^\nu -\frac{\partial \mathcal{L}}{\partial \left( \partial_\nu \phi\right)}\partial_\mu \phi \xi^\mu \right)e^{-f(x^\alpha)}\right]\\
&\qquad\qquad\qquad\qquad\qquad\qquad\qquad+e^{-f(x^\alpha)}\xi^\nu\left( \gamma_\nu \partial_\mu s^\mu - \gamma_\mu \partial_\nu s^\mu\right) \bigg\}\; d^dx=0.
\end{split}
\eeq
Since \eqref{N23} should be satisfied for any subdomain $D$ of $\Omega$, we obtain \eqref{GNT}.
\end{proof}

We can now formulate the Generalized Noether Theorem for Herglotz variational principle. Since the Action $S(\delta \Omega)$ is gauge invariant, the action-density field $s^ \mu$ is not uniquely defined by the Fundamental Problem given by \eqref{H7}. This is not a problem in determining the equation of motion \eqref{GHEL} that completely fix the physical field $\phi$, in a similar way that Maxwell's equations completely fix the physical electromagnetic field but do not fix the vector and scalar potentials. However, the specific choices we make for the action-density $s^\mu$ plays an important role when analyzing the symmetries from invariance transformations, since $s^\mu$ arouses in the identity \eqref{GNT} that holds when the Action is invariant under \eqref{N01}. In this context, in order to obtain conserved quantities from \eqref{GNT} with physically meaningful content, it is reasonable to choose a gauge where the identity \eqref{GNT} becomes a total derivative. Thus we have for the generalization of the Noether Theorem:
\begin{theorem}[Generalized Noether Theorem] 
\label{GHNT}
Let \rf{N01} be a symmetry group of the functional $S(\delta \Omega)$ defined in \rf{H7}, and let us assume the canonical gauge \eqref{CG}. Then the following quantity
\beq
\lb{HNT}
\left[\frac{\partial \mathcal{L}}{\partial \left( \partial_\nu \phi\right)}\eta +\mathcal{L}\xi^\nu -\frac{\partial \mathcal{L}}{\partial \left( \partial_\nu \phi\right)}\partial_\mu \phi\; \xi^\mu \right]e^{-f(x^\alpha)}
\eeq
is conserved (constant of motion) on solutions of the generalized Euler-Lagrange equation \rf{GHEL}.
\end{theorem}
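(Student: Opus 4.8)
The plan is to obtain the result essentially by inspection from the identity \eqref{GNT} already established in Theorem \ref{TNHG}, since the canonical gauge \eqref{CG} is engineered precisely to annihilate the single non-divergence term that appears there. First I would recall that Theorem \ref{TNHG} guarantees, on any solution of the generalized Euler-Lagrange equation \eqref{GHEL} and for the infinitesimal generators $\xi^\mu=\frac{d\varphi^\mu}{d\epsilon}\vert_{\epsilon=0}$ and $\eta=\frac{d\psi}{d\epsilon}\vert_{\epsilon=0}$ of the symmetry group, that the sum of a total divergence and the term $e^{-f(x^\alpha)}\left(\gamma_\nu\partial_\mu s^\mu-\gamma_\mu\partial_\nu s^\mu\right)\xi^\nu$ vanishes identically. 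This identity does all the heavy lifting.

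Next I would impose the canonical gauge condition \eqref{CG}, namely $\gamma_\nu\partial_\mu s^\mu-\gamma_\mu\partial_\nu s^\mu=0$. Contracting it with $\xi^\nu$ shows that the obstruction term in \eqref{GNT} vanishes for arbitrary generators, so that \eqref{GNT} collapses to the statement that the $d$-divergence of the current \eqref{HNT} is zero. Writing
\beq
J^\nu:=\left[\frac{\partial \mathcal{L}}{\partial\left(\partial_\nu\phi\right)}\eta+\mathcal{L}\xi^\nu-\frac{\partial \mathcal{L}}{\partial\left(\partial_\nu\phi\right)}\partial_\mu\phi\,\xi^\mu\right]e^{-f(x^\alpha)},
\eeq
the identity reduces to the continuity equation $\frac{d}{dx^\nu}J^\nu=0$ on solutions of \eqref{GHEL}, which is exactly the asserted conservation of \eqref{HNT}.

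I do not expect a genuine obstacle here: the substantive variational work was discharged in Theorem \ref{TNHG}, and the step remaining is purely algebraic. The one point I would take care to make explicit is the meaning of ``conserved'' in this $d$-dimensional field setting. Here it denotes a \emph{local} continuity equation $\partial_\nu J^\nu=0$ rather than the constancy of a single number; integrating $J^\nu$ over a spacelike hypersurface and applying the Divergence Theorem then produces a conserved charge exactly as in the classical Noether theory. The role of the canonical gauge \eqref{CG} is solely to remove the gauge-dependent obstruction that would otherwise spoil the total-divergence structure and prevent \eqref{HNT} from being a bona fide conserved current; this also clarifies why, among all admissible choices of the action-density $s^\mu$ permitted by the gauge freedom \eqref{GI}, this particular gauge is the physically natural one.
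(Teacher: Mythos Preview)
Your proposal is correct and matches the paper's own proof essentially verbatim: the paper simply states that the result follows directly from \eqref{GNT} by inserting the canonical gauge \eqref{CG}. Your additional commentary on the meaning of ``conserved'' as a local continuity equation is more explicit than what the paper provides, but entirely in line with how the authors use the theorem in the examples.
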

\begin{proof}
The proof follows directly from \eqref{GNT} by inserting \eqref{CG}.
\end{proof}

\begin{remark} Note that the canonical gauge \eqref{CG} is not the only possibility to reduce the identity \eqref{GNT} into a total derivative. For example, the gauge $(\gamma_\nu\partial_\mu s^\mu-\gamma_\mu\partial_\nu s^\mu)\xi^\nu=e^{f(x^\alpha)}\partial_\nu F^\nu(x^\alpha)$, where $F^\nu(x^\alpha)$ is any vector field, also reduces \eqref{GNT} into a total derivative. However, there is no physical motivation to introduces a nonnull field $F^\nu(x^\alpha)$ in the problem.
\end{remark}

\begin{remark}
Our Generalized Noether Theorem \rf{HNT} generalizes for fields the Noether's like theorems for the classical Herglotz problem \cite{GGB,SMT}
\end{remark}

\begin{remark}
It is easy to see that for Lagrangian functions independent on $s^{\mu}$, the Generalized Noether Theorem \rf{HNT} reduces to the usual one,
\beq
\lb{NT}
\frac{\partial \mathcal{L}}{\partial \left( \partial_\nu \phi\right)}\eta +\mathcal{L}\xi^\nu -\frac{\partial \mathcal{L}}{\partial \left( \partial_\nu \phi\right)}\partial_\mu \phi\; \xi^\mu=\mbox{constant},
\eeq
since, in this case, $\gamma_\mu =0$ implies $f(x^\alpha)=\mbox{constant}$.
\end{remark}

Finally, it is straightforward to generalize Theorem \ref{GHNT} to the case with several fields $\phi^i(x^{\mu})=\phi^i(x^1,x^2,\cdots,x^d)$ ($i=1,...,N$). In this case we have the Action $S(\delta\Omega)$ in \eqref{H7B}. By defining a one-parameter group of invertible transformations
\beq
\label{N01B}
\left\{
\begin{array}{l}
\tilde{x}^\mu=\varphi^\mu(x^\nu,\phi^j;\epsilon),\\
\tilde{\phi^i}=\psi^i(x^\nu,\phi^j;\epsilon),
\end{array}
\right.
\eeq
where $\varphi^\mu(x^\nu,\phi^j;0)=x^\mu$ and $\psi^i(x^\nu,\phi^j;0)=\phi^i$, and the transformed action-density $\tilde{s}^\mu$ of $s^\mu$ is solution of $
\tilde{\partial}_\nu \tilde{s}^\nu=\mathcal{L}\left(\tilde{x}^\mu,\tilde{\phi}^j,\tilde{\partial}_\mu\tilde{\phi}^j,\tilde{s}^\mu\right)$ ($\tilde{x}^\mu\in \tilde{\Omega}$), we get
\begin{theorem}[Generalized Noether Theorem for several fields] 
\label{GHNTB}
Let \rf{N01B} be a symmetry group of the functional $S(\delta \Omega)$ defined in \rf{H7B}, and let us assume the canonical gauge \eqref{CG}. Then the following quantity
\beq
\lb{HNTB}
\left[\frac{\partial \mathcal{L}}{\partial \left( \partial_\nu \phi^i\right)}\eta^i +\mathcal{L}\xi^\nu -\frac{\partial \mathcal{L}}{\partial \left( \partial_\nu \phi^i\right)}\partial_\mu \phi^i\; \xi^\mu \right]e^{-f(x^\alpha)},
\eeq
where $\xi^\mu=\frac{d\varphi^\mu}{d\epsilon}\vert_{\epsilon=0}$ and $\eta^i=\frac{d\psi^i}{d\epsilon}\vert_{\epsilon=0}$, is conserved (constant of motion) on the solution of the generalized Euler-Lagrange equation \rf{GHEL}.
\end{theorem}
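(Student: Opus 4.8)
The plan is to reduce Theorem~\ref{GHNTB} to the single-field argument already carried out in Theorems~\ref{TNHG} and~\ref{GHNT}, since the passage from one field $\phi$ to several fields $\phi^i$ affects only those steps where the Lagrangian is differentiated along the flow. First I would linearize the group \eqref{N01B} near the identity, writing $\tilde{x}^\mu = x^\mu + \xi^\mu\epsilon$ and $\tilde{\phi}^i = \phi^i + \eta^i\epsilon$ with $\xi^\mu = \frac{d\varphi^\mu}{d\epsilon}|_{\epsilon=0}$ and $\eta^i = \frac{d\psi^i}{d\epsilon}|_{\epsilon=0}$, exactly as in \eqref{N05}. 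The invariance hypothesis $\tilde{S}(\delta\tilde{D}) = S(\delta D)$ together with the change of variables back to $x^\mu$ (introducing the Jacobian determinant) reproduces \eqref{N06}--\eqref{N11} essentially verbatim, because the action-density $s^\mu$ and its transform $\tilde{s}^\mu$ are still governed by the single scalar equation $\partial_\nu s^\nu = \mathcal{L}$ in \eqref{H7B}; the number of physical fields does not alter the structure of that equation.

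The one place where multiplicity enters is the total $\epsilon$-derivative of $\mathcal{L}$. Here I would replace the single terms $\frac{\partial\mathcal{L}}{\partial\phi}\eta$ and $\frac{\partial\mathcal{L}}{\partial(\partial_\nu\phi)}\frac{d}{d\epsilon}(\tilde{\partial}_\nu\tilde{\phi})$ of \eqref{N12} by the sums $\frac{\partial\mathcal{L}}{\partial\phi^i}\eta^i$ and $\frac{\partial\mathcal{L}}{\partial(\partial_\nu\phi^i)}\frac{d}{d\epsilon}(\tilde{\partial}_\nu\tilde{\phi}^i)$, with the summation convention now running over $i$ as well as $\nu$. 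Identity \eqref{N13} holds componentwise, $\frac{d}{d\epsilon}(\tilde{\partial}_\nu\tilde{\phi}^i)|_{\epsilon=0} = \frac{d\eta^i}{dx^\nu} - \partial_\mu\phi^i\frac{d\xi^\mu}{dx^\nu}$, so inserting it yields the multi-field analog of \eqref{N14}. Since $\gamma_\nu = \partial_\nu f$ remains a single gradient vector (it depends on $s^\mu$, not on the fields), $\zeta^\nu = \frac{d\tilde{s}^\nu}{d\epsilon}|_{\epsilon=0}$ again admits the integrating factor $e^f$, giving $\zeta^\nu = A^\nu e^f$ as in \eqref{N16}, and the boundary condition $A^\nu n_\nu = 0$ on $\delta D$ carries over unchanged.

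The remaining work is the integration by parts in \eqref{N19}--\eqref{N23}, now performed for each field index. Collecting the $\frac{d\eta^i}{dx^\nu}$ terms and integrating by parts produces the operator $\frac{\partial\mathcal{L}}{\partial\phi^i} - \frac{d}{dx^\nu}\frac{\partial\mathcal{L}}{\partial(\partial_\nu\phi^i)} + \gamma_\nu\frac{\partial\mathcal{L}}{\partial(\partial_\nu\phi^i)}$ for each $i$; on solutions of \eqref{GHEL2} every one of these vanishes, so the corresponding $\eta^i$-terms drop out. Handling the $\frac{d\xi^\mu}{dx^\nu}$ terms as in \eqref{N22}, using \eqref{H7B} to rewrite $\partial_\nu s^\nu = \mathcal{L}$, and invoking the arbitrariness of $D\subset\Omega$, I would obtain the multi-field identity
\[
\frac{d}{dx^\nu}\!\left[\!\left(\frac{\partial\mathcal{L}}{\partial(\partial_\nu\phi^i)}\eta^i + \mathcal{L}\xi^\nu - \frac{\partial\mathcal{L}}{\partial(\partial_\nu\phi^i)}\partial_\mu\phi^i\,\xi^\mu\right)e^{-f}\right] + e^{-f}\left(\gamma_\nu\partial_\mu s^\mu - \gamma_\mu\partial_\nu s^\mu\right)\xi^\nu = 0,
\]
the exact counterpart of \eqref{GNT}. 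Imposing the canonical gauge \eqref{CG} then annihilates the second term, leaving the bracket of \eqref{HNTB} divergence-free, which is the claimed conservation law. I do not expect a genuine obstacle: the only subtlety worth stressing is that $s^\mu$, $\gamma_\nu$, and hence the gauge condition \eqref{CG} remain single objects independent of $N$, so it is the scalar nature of the action-density equation, not the field count, that makes the integrating-factor step go through unchanged.
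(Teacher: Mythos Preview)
Your proposal is correct and follows exactly the route the paper itself indicates: the paper's own proof of Theorem~\ref{GHNTB} consists of the single sentence that it ``follows similarly the ones of Theorem~\ref{TNHG} and Theorem~\ref{GHNT}, by considering the Definition~\ref{ID},'' and what you have written is a careful unpacking of that sentence, correctly identifying that the only change is the summation over the field index $i$ in the $\epsilon$-derivative of $\mathcal{L}$ and in the subsequent integrations by parts, while the action-density equation, the integrating factor $e^{f}$, and the canonical gauge remain single-field objects.
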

\begin{proof}
The proof follows similarly the ones of Theorem \ref{TNHG} and Theorem \ref{GHNT}, by considering the Definition \ref{ID}.
\end{proof}


\section{Examples}\lb{sec:EX}

The conserved quantities in any dynamical system play a major role in the analysis of the system. They enable us to solve some problems without a more detailed knowledge of the dynamics, for example, as found in any undergraduate textbook in physics, we can solve easily several mechanical problems by making use of the energy and momentum conservations without the necessity of solving the dynamical equation given by Newton's second law of motion. In general, the continuous symmetries and its related conserved quantities give us first integrals for dynamical systems that can be used to simplify the problem. In this context, our generalized Noether Theorems \ref{HNT} and \ref{HNTB} provide us a fairly automatic procedure to find conserved quantities for dissipative systems.  

In this section, we consider two examples in order to illustrate the potential of application of our Action Principle in Definition \ref{AP}, and its related Noether Theorems \ref{HNT} and \ref{HNTB}, to investigate dissipative systems. In the first, we investigate the conserved quantities related to the symmetries under space and time transformations for a vibrating string under viscous forces. The second example illustrates a conservation law related to internal (global) symmetry in a dissipative complex scalar field system.

\subsection{Spacetime transformations symmetries: a vibrating string under viscous forces}

In order to illustrate the potential of application of our Action Principle \ref{AP} and generalized Noether theorems \eqref{GHNT} and \eqref{GHNTB} to investigate dissipative systems, we consider a vibrating string under viscous forces (like the frictional reaction of the air through which the string moves, among others). This is the simplest continuous mechanical system that we can include dissipative forces. We can also extend this method to bars, membranes, etc.
Let us consider a two-dimensional space-time ($d=2$), with $x_1=t$ ($t\in [t_a,t_b]$), and $x_2=x$, ($x\in [a,b]$). The Lagrangian function for a vibrating string under viscous forces can be given by \cite{MJJG2}
\beq
\lb{E1a}
\mathcal{L}=\frac{\mu}{2}\left(\partial_t \phi\right)^2-\frac{T}{2}\left(\partial_x \phi\right)^2-\frac{\gamma}{\mu}s^1
\eeq
where $\mu$ is the mass density, $T$ is the string tension, $\phi$ is the string transverse displacement from equilibrium, $\gamma$ is the viscous coefficient of the medium, and we choose $\gamma_\mu=(-\frac{\gamma}{\mu},0)$. The last term in \eqref{E1a} can be interpreted as potential energy for the dissipative force \cite{MJJG2}. The first and second terms in \eqref{E1a} are the kinetic energy and the elastic potential, respectively. From the Lagrangian function \eqref{E1a}, it is easy to see that our Action Principle gives the correct equation of motion for a string under the presence of a viscous force proportional to the first order derivative $\partial_t \phi$. By inserting \eqref{E1a} into the generalized Euler-Lagrange equation \eqref{GHEL} we get
\beq
\lb{E1b}
\mu \partial_{tt} \phi-T\partial_{xx}\phi+\gamma \partial_t \phi=0.
\eeq

Since we have a frictional force, the total energy of the system is not conserved. From our generalized Noether's Theorem \ref{GHNT} the conserved quantity under time and space translations ($\xi^\mu\neq 0$ and $\eta=0$) is 
\beq
\lb{E1c}                      
T^\nu_\mu e^{\frac{\gamma}{\mu} t}=\left(\frac{\partial \mathcal{L}}{\partial \left( \partial_\nu \phi\right)}\partial_\mu \phi- \delta_\mu^\nu \mathcal{L} \right)e^{\frac{\gamma}{\mu} t},
\eeq
where $T_\mu^\nu$ is the well know stress-energy tensor for a scalar field $\phi$, $\delta_\mu^\nu$ is the Kronecker delta function, and we set $f(x^\alpha)=-\frac{\gamma}{\mu} t$ since $\partial_\nu f =\partial_{s^\nu}\mathcal{L}=\gamma_\nu=(-\frac{\gamma}{\mu},0) $ is a constant vector. Since \eqref{E1c} is conserved, we have
\beq
\lb{E1d}
\partial_\nu \left(T^\nu_\mu e^{\frac{\gamma}{\mu} t}\right)=0.
\eeq
Then, the two conserved quantities are given by 
\beq
\lb{E1e}
Ee^{\frac{\gamma}{\mu} t}\;\;\;\;\; \mbox{and}\;\;\;\;\; Pe^{\frac{\gamma}{\mu} t},
\eeq
where 
\beq
\lb{E1f}
E(t)=\int_a^b T^1_1dx=\int_a^b\left(\frac{\mu}{2}\left(\partial_t \phi\right)^2+\frac{T}{2}\left(\partial_x \phi \right)^2+\frac{\gamma}{\mu}s^1\right)dx 
\eeq
is the total energy (sum of kinetic and potential energies), and
\beq
\lb{E1g}
 P(t)=\int_a^b T^0_1dx=\mu\int_a^b \partial_t \phi\; \partial_x \phi\; dx
\eeq
is the total momentum of the system. Then, from the conserved quantities \eqref{E1e} we can conclude that the value of both energy and momentum decreases exponentially in time. In particular, we have for the energy
\beq
\lb{E1h}
E(t)=E_0e^{-\frac{\gamma}{\mu} t}=e^{-\frac{\gamma}{\mu}}\int_a^b\left.\left(\frac{\mu}{2}\left(\partial_t \phi\right)^2+\frac{T}{2}\left(\partial_x \phi \right)^2\right)\right|_{t=0}dx,
\eeq
where $E_0$ is the initial value of the mechanical energy, and, since the Lagrangian function is defined less than a constant (actually, it is defined less than a total derivative) as in traditional calculus of variation, we set $s^1|_{t=0}=0$ without loss of generality.

\subsection{Time transformations symmetry: a two-degree-of-freedom nonlinear dissipative oscillator}

In the present example, we consider a spherical pendulum under frictional forces. A spherical pendulum is a simple pendulum consisting of a particle of mass $m$ suspended from a fixed point $O$ by a rigid rod of length $l$ and negligible mass. The pendulum is free to swing to the entire solid angle about the point $O$. Consequently, the particle of mass $m$ moves on a spherical surface of radius $l$ in the gravitational field. A Lagrangian for this system, in a spherical coordinate system, is given by
\beq
\lb{E3a}
\mathcal{L}=\frac{ml^2}{2}\left(\dot{\theta}^2+\sin^2(\theta){\dot{\phi}}^2\right)+mgl\cos(\theta)-\frac{\gamma}{ml} s,
\eeq
where on this case $x_1=t$ ($d=1$), $\gamma_1=-\frac{\gamma}{ml}$, $\phi^1=\theta$ and $\phi^2=\phi$ ($N=2$). The variable $\theta$ is the polar angle (the angle between the vertical line and the rigid rod), and $\phi$ is the azimuthal angle (the rotation angle about the vertical line). As in the previous example, the last term in \eqref{E3a} can be interpreted as the potential energy of the dissipative forces acting in the particle. The remaining terms are the kinetic energy minus the potential energy of the conservative gravitational force for the spherical pendulum \cite{Symon}. From our generalized Euler-Lagrange equation \eqref{GHEL2} we obtain the following equations of motion for the pendulum
\beq
\lb{E3b}
ml^2\ddot{\theta}+\gamma l\dot{\theta}-ml^2\sin(\theta)\cos(\theta)\dot{\phi}^2+mgl\sin(\theta)=0
\eeq
and
\beq
\lb{E3c}
\frac{d}{dt}\left(ml^2\sin^2(\theta)\dot{\phi}\right)+\gamma l\sin^2(\theta)\dot{\phi}=0.
\eeq
For $\gamma=0$, \eqref{E3b} and \eqref{E3c} reduce to the well know equations of motion for the classical conservative spherical pendulum \cite{Symon}. On the other hand, as in the simple example of a particle under frictional forces discussed in Section \ref{sec:APb}, the effective potential energy $\frac{\gamma}{ml}s$ for the dissipative forces in the Lagragian \eqref{E3a} introduces frictional forces proportional to the velocity in the equations of motion. In \eqref{E3b} we have a frictional force proportional to the polar velocity $v_\theta=l\dot{\theta}$ and in \eqref{E3c} a force proportional to the azimuthal velocity $v_\phi=l\sin(\theta)\dot{\phi}$. As a consequence, the total energy of the system, as well as the azimuthal angular momentum $p_\phi=\frac{\partial L}{\partial \dot{\phi}}=ml^2\sin^2(\theta)\dot{\phi}$, is not conserved. It is important to notice that for $\gamma=0$ the equation \eqref{E3c} reduces to the conservation of the azimuthal angular momentum $\frac{dp_\phi}{dt}=0$, as we should expect for a dissipationless system. From our generalized Noether's Theorem \ref{GHNTB} the conserved quantity under time translations ($\xi^1\neq 0$ and $\eta=0$) is
\beq
\lb{E3d}
H(t)e^{\frac{\gamma}{ml}t}=\left(\frac{ml^2}{2}\left(\dot{\theta}^2+\sin^2(\theta){\dot{\phi}}^2\right)-mgl\cos(\theta)+\frac{\gamma}{ml} s\right)e^{\frac{\gamma}{ml}t},
\eeq
from where it is evident that the Hamiltonian $H(t)=E(t)$, corresponding to the total energy $E$ of the system (kinetic energy plus the potential energy of both conservative and dissipative forces), is not conserved if $\gamma\neq 0$.  Actually, as in the previous example we have for the energy 
\beq
\lb{E3e}
E(t)=E_0e^{-\frac{\gamma}{ml}t}=\left.\left(\frac{ml^2}{2}\left(\dot{\theta}^2+\sin^2(\theta){\dot{\phi}}^2\right)-mgl\cos(\theta)\right)\right|_{t=0}e^{\frac{\gamma}{ml}t},
\eeq
where $E_0$ is a constant. Consequently, the total energy decreases exponentially with time. Finally, the conserved quantity \eqref{E3d} obtained from our generalized Noether's Theorem \ref{GHNTB} provide us a relation \eqref{E3e} that can be used to eliminate $\dot{\phi}$ in the equations of motion \eqref{E3b} and \eqref{E3c} facilitating the solution of this nonlinear problem.

\subsection{Internal symmetry: a dissipative complex scalar field}

A complex scalar field is the simplest problem displaying internal symmetry. It appears in the description of quantum systems, where the complex field $\phi$ describes the wave function of a physical scalar field related to bosonic particles and its anti-particles. It arises in the description of several quantum systems, for example, in the description of the collective excitation (phonon) in periodic elastic arrangement of atoms/molecules in condensed matter (solids and some liquids). In order to consider the simplest dissipative (open) quantum system displaying internal symmetry, let us consider the following Lagrangian function
\beq
\lb{E2a}
\mathcal{L}=\partial_\mu \phi \partial^\mu \phi^*-m^2\phi \phi^*-\gamma_\mu s^\mu,
\eeq
where $\phi^*$ is the complex conjugate of $\phi$, $m$ is the mass density of the field $\phi$, and the last term in \eqref{E2a} can be interpreted as a potential energy of a dissipative interaction. From the generalized Euler-Lagrange equation \eqref{GHEL2} we obtain, by considering $\phi^1=\phi$ and $\phi^2=\phi^*$, the following equation of motion:
\beq
\lb{E2b}
\partial_\mu \partial^\mu \phi+m\phi-\gamma_\mu \partial^\mu \phi=0,
\eeq
that for $\gamma_\mu=0$, and $\partial_\mu \partial^\mu=\frac{1}{c^2}\frac{\partial^2}{\partial t^2}-\nabla^2$, reduces to the well know Klein-Gordon equation. 

The lagrangian \eqref{E2a} has a continuous symmetry related to phase changes of $\phi$, since the transformation $\tilde{x}^\mu=x^\mu$ and $\tilde{\phi}=e^{i\varepsilon}\phi$ ($\xi^\mu=0$, $\eta^1=i\phi$, and $\eta^2=-i\phi^*$) leaves \eqref{E2a} invariant. Thus, from the generalized Noether theorem \eqref{HNTB} we obtain the following associated conserved current
\beq
\lb{E2c}
j^\mu=i\left(\phi\partial^\mu \phi^*-\phi^*\partial^\mu \phi\right)e^{\gamma_\alpha x^\alpha}.
\eeq
Since currents of the form $i\left(\phi\partial^\mu \phi^*-\phi^*\partial^\mu \phi\right)$ have the interpretation of electric charge (or particle number), from \eqref{E2c} we see that, as we should expect in an dissipative (open) problem, the charge of a system defined by the Lagrangian \eqref{E2a} decreases exponentially in time when $\gamma_\mu=(\gamma,0,0,0)$.


\section{Conclusions}\lb{sec:C}

In the present work, we formulate a generalization of the Noether Theorem for the Action Principle with action-dependent Lagrangian functions introduced in \cite{MJJG,MJJG2}. When the dependence on the action is removed, both the Action Principle and the generalized Noether Theorem reduces to the traditional ones. Noether's theorem is one of the most important theorems for physics in the 20th century. It is well known that all conservation laws in physics, \textrm{e.g.}, conservation of energy or conservation of momentum, are directly related to the invariance of the action under a family of transformations. However, the classical Noether theorem cannot yield information about constants of motion for non-conservative systems since it is not possible to formulate physically meaningful Lagrangians for this kind of systems in the classical calculus of variation. On the other hand, our recent Action Principle with action-dependent Lagrangian functions \cite{MJJG,MJJG2} enables us to construct meaningful Lagrangian functions, which provide physically consistent expressions for the momentum and the Hamiltonian of the system, for a huge variety of non-conservative systems (classical and quantum). Consequently, the generalized Noether Theorem we formulate in the present work enables us to investigate conservation laws for non-conservative systems. 
In order to illustrate the potential of application of our Action Principle and its related Noether Theorem, we consider three examples of dissipative systems. In the first, we investigate the conserved quantities related to spacetime transformations symmetries for a viscous vibrating string. In the second example, we studied the conservation law related to time transformations symmetry in a two-degree-of-freedom nonlinear dissipative oscillator. Finally, in the last example, we analyze the conservation law related to internal (global) symmetry of a dissipative complex scalar field.


\section*{acknowledgments}
This work was partially supported by CNPq and CAPES (Brazilian research funding agencies). 


\section*{Conflict of Interest}

Conflict of Interest: The authors declare that they have no conflict of interest.



\begin{thebibliography}{}

\bibitem{bauer} Bauer, P. S.: Dissipative Dynamical Systems I. Proc. Natl. Acad. Sci {\bf 17} 311 (1931).
\url{https://doi.org/10.1073/pnas.17.5.311}


\bibitem{Ktevens}  Stevens, K W.H.:
The Wave Mechanical Damped Harmonic Oscillator. Proc. Phys. Soc. {\bf 72} 1027 (1958).
\url{https://doi.org/10.1088/0370-1328/72/6/311}

\bibitem{Havas} Havas, P.: The range of application of the lagrange formalism — I. Nuovo Cimento Suppl. {\bf 5} 363 (1957).
\url{https://doi.org/10.1007/BF02743927}

\bibitem{Negro} Negro, F., Tartaglia, A.: The quantization of quadratic friction. Phys. Lett. A {\bf 77} 1 (1980).
\url{https://doi.org/10.1016/0375-9601(80)90614-3}

\bibitem{Negro2} Negro, F., Tartaglia, A.: Quantization of motion in a velocity-dependent field: The $v^2$ case. Phys. Rev. A {\bf 23} 1591 (1981);
\url{https://doi.org/10.1103/PhysRevA.23.1591}

\bibitem{Brinati} Brinati, J.R., Mizrahi, S.S.: Quantum friction in the c‐number picture: The damped harmonic oscillator. J. Math. Phys. {\bf 21} 2154 (1980).
\url{https://doi.org/10.1063/1.524676}

\bibitem{Tartaglia} Tartaglia, A.: Non-conservative forces, lagrangians and quantisation. Eur. J. Phys. {\bf 4} 231 (1983).
\url{https://doi.org/10.1088/0143-0807/4/4/007}


\bibitem{Bateman} Bateman, H.: On Dissipative Systems and Related Variational Principles. Phys. Rev. {\bf 38} 815, (1931).
\url{https://doi.org/10.1103/PhysRev.38.815}

\bibitem{MF} Morse P.M., Feshbach, H.: Methods of Theoretical Physics, McGraw-Hill, New York, (1953). 

\bibitem{FT} Feshbach, H., Tikochinsky, Y.: Quantization of the damped harmonic oscillator, Trans. N. Y. Acad. Sci. {\bf 38} 44 (1977).
\url{https://doi.org/10.1111/j.2164-0947.1977.tb02946.x}

\bibitem{CRTV} Celeghini, E., Rasetti, M.,Tarlini, M., Vitiello, G.: SU(1,1) squeezed states as damped oscillators, Mod. Phys. Lett. B {\bf 3} 1213 (1989).
\url{https://doi.org/10.1142/S0217984989001850}

\bibitem{CRV} Celeghini, E., Rasetti, H., Vitiello, G.: Quantum dissipation, Ann. Phys. (N.Y.) {\bf 215} 156 (1992).
\url{https://doi.org/10.1016/0003-4916(92)90302-3}

\bibitem{VujaJones} Vujanovic, B.D., Jones, S.E.: Variational Methods in Nonconservative Phenomena, Academic, San Diego (1989).

\bibitem{Riewe} Riewe, F.: Nonconservative Lagrangian and Hamiltonian mechanics. Phys. Rev. E {\bf 53} 1890 (1996).
\url{https://doi.org/10.1103/PhysRevE.53.1890}

\bibitem{LazoCesar} Lazo, M. J., Krumreich, C. E.: The action principle for dissipative systems.
J. Math. Phys. {\bf 55}, 122902 (2014).
\url{https://doi.org/10.1063/1.4903991}

\bibitem{MJJG} Lazo, M.J., Paiva, J., Amaral, J. T. S., Frederico, G. S. F.: Action principle for action-dependent Lagrangians toward nonconservative gravity: Accelerating universe without dark energy.
Phys. Rev. D {\bf 95}, 101501(R) (2017).
\url{https://doi.org/10.1103/PhysRevD.95.101501}

\bibitem{MJJG2} Lazo, M.J., Paiva, J., Amaral, J. T. S., Frederico, G. S. F.:
An action principle for action-dependent Lagrangians: Toward an action principle to non-conservative systems. 
J. Math. Phys. {\bf 59}, 032902 (2018).
\url{https://doi.org/10.1063/1.5019936}

\bibitem{Herg1}  Herglotz, G.: Ber\"uhrungstransformationen, Lectures at the University of G\"ottingen, G\"ottingen, (1930).

\bibitem{Herg2}  Guenther, R.B., Guenther, C.M., Gottsch, J.A.: The Herglotz Lectures
on Contact Transformations and Hamiltonian Systems, Lecture Notes in Nonlinear Analysis,
Vol. 1, Juliusz Schauder Center for Nonlinear Studies, Nicholas Copernicus University, Tor\'un, (1996).
{ISSN 2082-4335}

\bibitem{GGB} Georgieva, B., Guenther, R., Bodurov, T.: Generalized variational principle of Herglotz for several independent variables. First Noether-type theorem,
J. Math. Phys. {\bf 44}, 3911 (2003).
\url{https://doi.org/10.1063/1.1597419}

\bibitem{GGB2} Georgieva, B., Guenther, R.: First Noether-type theorem for the generalized variational principle of Herglotz. Topol. Methods Nonlinear Anal. {\bf 20} (1), 261-273 (2002).
\url{https://projecteuclid.org/euclid.tmna/1470081175}

\bibitem{SMT} Santos, S.P.S., Martins ,N., Torres, D.F.M.: Variational problems of Herglotz type with time delay: DuBois--Reymond condition and Noether's first theorem, Discrete \& Continuous Dynamical Systems - A, {\bf 35} (9), 4593 (2015).
\url{http://dx.doi.org/10.3934/dcds.2015.35.4593}

\bibitem{Zhang1} Zhang, Y.: Variational problem of Herglotz type for Birkhoff system and its Noether's theorems. Acta Mech. {\bf 228} (4), 1-12 (2017).
\url{https://doi.org/10.1007/s00707-016-1758-3}

\bibitem{Zhang2} Zhang, Y.: Noether's theorem for a time-delayed Birkhoffian system of Herglotz type. . Int. J. Nonlinear Mech. {\bf 101}, 36-43 (2018).
\url{https://doi.org/10.1016/j.ijnonlinmec.2018.02.010}

\bibitem{Zhang3} Tian, X., Zhang, Y.: Noether's theorem and its inverse of Birkhoffian system in event space based on Herglotz variational problem. Int. J. Theor. Phys. {\bf 57} (3), 887-897 (2018).
\url{https://doi.org/10.1007/s10773-017-3621-2}

\bibitem{Zhang4} Tian, X., Zhang, Y.: Noether symmetry and conserved quantities of fractional Birkhoffian system in terms of Herglotz variational problem. Commun. Theor. Phys. {\bf 70} (3), 280-288 (2018).
\url{https://doi.org/10.1088/0253-6102/70/3/280}

\bibitem{Zhang5} Tian, X., Zhang, Y.: Noether's theorem for fractional Herglotz variational principle in phase space. Chaos, Solitons and Fractals, {\bf 119}, 50-54 (2019).
\url{https://doi.org/10.1016/j.chaos.2018.12.005}

\bibitem{Symon} Symon, K.R.: Mechanics, 3rd Ed., Addison-Wesley Publishing Company, Inc., Reading, Mass., (1971). 




\end{thebibliography}
\end{document}